\documentclass[11pt]{article}
\usepackage{authblk}

\usepackage[margin=1.05in]{geometry}
\usepackage{amsmath,amssymb,amsthm}
\usepackage{graphicx}
\usepackage{booktabs}
\usepackage{microtype}
\usepackage{xcolor}
\usepackage[colorlinks=true,linkcolor=blue!60!black,citecolor=blue!60!black,urlcolor=blue!60!black]{hyperref}

\graphicspath{{figures/}}

\newtheorem{theorem}{Theorem}
\newtheorem{lemma}{Lemma}
\newtheorem{corollary}{Corollary}

\newtheorem{definition}{Definition}
\newtheorem{proposition}{Proposition}

\newcommand{\feas}{\mathcal{F}}
\newcommand{\Z}{\mathbb{Z}}
\newcommand{\R}{\mathbb{R}}

\newcommand{\ket}[1]{|#1\rangle}
\newcommand{\bra}[1]{\langle #1|}
\newcommand{\braket}[2]{\langle #1|#2\rangle}
\newcommand{\vmax}{v_{\max}}

\title{\textbf{Transferred QAOA Parameters Remember the Penalty Scale: 
A $\lambda$-Resonance Law for Constrained Quantum Optimization}}

\author{Krit Grover \\ \texttt{krit.grover@mail.utoronto.ca}}
\affil{Department of Computer and Mathematical Sciences, University of Toronto Scarborough\\
Toronto, Canada}
\date{July 2026}

\begin{document}
\maketitle

\begin{abstract}
Training the variational angles of the Quantum Approximate
Optimization Algorithm once on a small instance and reusing them on larger
ones, known as \emph{parameter transfer}, is the standard route past the exact-simulation wall. Existing
literature explains its success almost entirely through \emph{structural}
similarity. We identify a new, independent axis that governs
transfer whenever constraints are encoded as penalties: \emph{the trained angles memorize the penalty weight $\lambda$ of their training instance}. For any
scalarized cost function with an integer-valued
violation count, we prove that at \emph{arbitrary} fixed QAOA
angles $(\beta,\gamma)$ of depth $p$, the probability mass $F(\lambda)$ on the
feasible subspace is a finite real trigonometric polynomial in $\lambda$ whose
angular frequencies lie on an integer lattice generated by the trained
$\gamma$'s. Three consequences follow immediately: transfer feasibility is a
\emph{resonance} peaked where the deployment penalty matches the training
penalty; the resonance width scales as $1/(\vmax\sum_k|\gamma_k|)$, so
low-$|\gamma|$ angle sets are systematically more transferable; and
the curve exhibits \emph{revival} peaks at spacings $2\pi/\gamma_k$. We confirm
all three predictions by exact statevector experiments on a 20-qubit
multi-user resource-allocation QUBO. The theorem is independent of how the angles were
obtained and applies to any integer-penalty QUBO, recasting a widely reported
failure mode of penalty-based QAOA as deterministic, predictable phase
interference rather than an energetic tuning problem.
\end{abstract}

\section{Introduction}
\label{sec:intro}

The Quantum Approximate Optimization Algorithm
(QAOA)~\cite{farhi2014qaoa,zhou2020qaoa} prepares a variational state by
alternating a problem-dependent phase operator with a transverse-field mixer,
and its practical viability on near-term hardware rests on avoiding per-instance
retraining of the angles $(\beta,\gamma)$. A large body of work has shown that angles concentrate across instances~\cite{brandao2018concentration}, that fixed angles perform well across entire graph
families~\cite{wurtz2021fixedangle}, and that angles trained on small graphs
transfer to much larger ones when the instances share local structure: subgraph composition and degree
parity~\cite{galda2021transferability}, learned graph
embeddings~\cite{falla2024grl}, hypergraph
neighborhoods~\cite{ornl2026hypergraph}, or edge-weight distributions that can
be rescaled away~\cite{sureshbabu2024parameter}. In short: the transfer
literature is a theory of \emph{structure}.

A separate literature treats constrained problems. Since hard constraints
cannot be expressed natively in an unconstrained binary quadratic model, the
standard encoding adds a penalty term $\lambda v(x)$, where $v(x)$ counts
constraint violations and $\lambda>0$ is a weight to be
chosen~\cite{lucas2014ising}. Choosing $\lambda$ is understood as an
\emph{energetic} trade-off: too small and infeasible states remain attractive,
too large and the objective signal is drowned and spectral gaps
shrink. Accordingly, the state of the art tunes $\lambda$ statically from
problem bounds~\cite{ayodele2022penalty,garcia2022exact}, sequentially or
variationally~\cite{scalablepen2026}, by
scheduling it across QAOA layers~\cite{feasdriven2026}, or replaces the
quadratic penalty altogether~\cite{montanez2024unbalanced}. In short: the
penalty literature is a theory of \emph{energetics}.

This paper sits in the intersection, which turns out to be governed by neither
structure nor energetics but by \emph{interference}. Our starting observation
is elementary once stated: in the QAOA circuit the penalty weight never appears
except through the phases $e^{-i\gamma_k\lambda v(x)}$ accumulated at each
layer. The trained angles therefore do not encode ``a penalty strong enough'';
they encode a \emph{phase pattern calibrated to the product}
$\gamma_k\lambda_{\mathrm{train}}$. Deploying the same angles at a different
$\lambda$ detunes every interference pathway that suppresses infeasible
states in a way that is exactly
computable.

\paragraph{Contributions.}
\begin{enumerate}
\item \textbf{A $\lambda$-resonance theorem} (Section~\ref{sec:theorem}): for any
  integer-valued penalty and \emph{arbitrary} fixed angles, the feasible
  probability mass $F(\lambda)$ is a finite real trigonometric polynomial whose
  frequencies lie on the integer lattice generated by the trained $\gamma$'s.
\item \textbf{Three falsifiable corollaries}: (i) resonance behavior with width
  bounded below by Bernstein's inequality at scale
  $1/(\vmax\sum_k|\gamma_k|)$; (ii) revival peaks at $\lambda$-spacings
  $2\pi/\gamma_k$; (iii) narrowing with instance size through $\vmax$.
\item \textbf{Exact-statevector confirmation} (Section~\ref{sec:experiments}):
  on a 20-qubit constrained allocation instance we verify the peak location,
  measure the predicted width scaling, observe both revivals at
  their predicted locations, and the decisive test: recover the trained
  $\gamma$-lattice as the measured line spectrum of $F(\lambda)$
  (Fig.~\ref{fig:spectrum}). No structural quantity enters these predictions;
  the instance is held fixed and only $\lambda$ moves.
\end{enumerate}

We emphasize the logical status of the result: the theorem is a property of the
QAOA ansatz itself. It holds whether the angles came from gradient descent,
Nelder--Mead, machine learning, or a random draw, and for every problem whose
penalty takes finitely many integer values. What training adds is only \emph{where} on
the resonance curve the deployment sits.

\section{Setting}
\label{sec:setting}

\subsection{Notation and basic definitions}
\label{sec:notation}

Throughout, $n$ is the number of qubits and $N=2^n$.

\begin{itemize}
\item \textbf{Bitstrings and the computational basis.} A bitstring is
  $x=(x_1,\dots,x_n)\in\{0,1\}^n$. The Hilbert space of $n$ qubits has the
  orthonormal \emph{computational basis} $\{\ket{x}\}_{x\in\{0,1\}^n}$; a pure
  state is $\ket{\psi}=\sum_x \psi(x)\ket{x}$ with complex amplitudes
  $\psi(x)=\braket{x}{\psi}$, $\sum_x|\psi(x)|^2=1$. Measuring in this basis
  returns $x$ with probability $|\psi(x)|^2$; a QAOA run is nothing but
  repeated preparation and measurement, so the physical output is the
  probability distribution $|\psi(\cdot)|^2$.
\item \textbf{Diagonal operators.} For $f:\{0,1\}^n\to\R$, write
  $\mathrm{diag}(f)=\sum_x f(x)\ket{x}\!\bra{x}$. Its exponential acts by pure
  phases,
  $e^{-i\gamma\,\mathrm{diag}(f)}\ket{x}=e^{-i\gamma f(x)}\ket{x}$:
  a diagonal evolution changes no measurement probability by itself; it only
  \emph{imprints phases} that later operations can convert into interference.
\item \textbf{Subset probability.} For $S\subseteq\{0,1\}^n$ the projector is
  $\Pi_S=\sum_{x\in S}\ket{x}\!\bra{x}$, and
  $\bra{\psi}\Pi_S\ket{\psi}=\sum_{x\in S}|\psi(x)|^2$ is the probability that
  a measurement lands in $S$.
\end{itemize}

\subsection{Constrained problems and the penalty method}

\begin{definition}[Violation count and feasible set]
\label{def:violation}
A \emph{violation count} is a function
$v:\{0,1\}^n\to\{0,1,\dots,\vmax\}$ taking non-negative \emph{integer} values,
with $\vmax=\max_x v(x)$. The \emph{feasible set} is
$\feas=\{x: v(x)=0\}$, assumed non-empty. Integrality is the only structural
assumption this paper needs, and it is automatic whenever $v$ literally counts
violated constraints.
\end{definition}

\begin{definition}[Penalized scalarization]
\label{def:penalty}
Given a bounded objective $C:\{0,1\}^n\to\R$ and
a penalty weight $\lambda>0$, the \emph{penalized cost} is
\begin{equation}
C_\lambda(x) \;=\; C(x) \;+\; \lambda\, v(x).
\label{eq:cost}
\end{equation}
\end{definition}

The penalty method exists because unconstrained binary samplers: quantum
annealers, QAOA, and QUBO heuristics alike, operate on \emph{all} of
$\{0,1\}^n$: an illegal assignment cannot be excluded from the state space,
only made expensive. The weight $\lambda$ is the exchange rate between
objective value and legality: a state with $v(x)=2$ carries a surcharge of
$2\lambda$ over its objective value.

The canonical violation count, and the one in our experiments, is the pairwise
\emph{one-hot} penalty. If bits are partitioned into groups
$G_1,\dots,G_r\subset\{1,\dots,n\}$ of which each may contain at most one
active bit, then
\begin{equation}
v(x)\;=\;\sum_{j=1}^{r}\ \sum_{\substack{a<b\\ a,b\in G_j}} x_a x_b ,
\qquad \vmax=\sum_{j=1}^{r} \binom{|G_j|}{2},
\label{eq:onehot}
\end{equation}
which counts exactly the violated exclusivity pairs: each product $x_ax_b$
equals $1$ precisely when both bits of a forbidden pair are set. The smallest
instance is a single group $G=\{1,2\}$: $v(x)=x_1x_2$ penalizes only the state
$11$, and $\vmax=1$. In our 20-qubit testbed, $r=6$ groups of size $2$ give
$\vmax=6$; the 28-qubit sibling has $r=10$ and $\vmax=10$. Note that $\vmax$
grows with the \emph{number of constraint groups}, a fact Corollary~\ref{cor:vmax} turns into a scaling law.

\subsection{QAOA at fixed angles}

Depth-$p$ QAOA alternates two operations, each carrying one real parameter per
layer: a \emph{phase separator}, which is the exponential of the diagonal cost
operator, and a \emph{mixer}
\begin{equation}
M(\beta)\;=\;e^{-i\beta\sum_{j=1}^{n} X_j}\;=\;\prod_{j=1}^{n} e^{-i\beta X_j},
\end{equation}
a product of identical single-qubit rotations. The mixer
is the only non-diagonal element of the circuit, and its matrix elements have
a closed form we will need:

\begin{lemma}[Mixer matrix elements]
\label{lem:mixer}
For all $x,y\in\{0,1\}^n$,
\begin{equation}
\bra{y}M(\beta)\ket{x}
\;=\;(\cos\beta)^{\,n-d(x,y)}\,(-i\sin\beta)^{\,d(x,y)}
\end{equation}
where $d(x,y)$ is the Hamming distance between $x$ and $y$.
\end{lemma}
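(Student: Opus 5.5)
The plan is to exploit the tensor-product structure of the mixer and reduce everything to a single-qubit computation. Since $\sum_j X_j$ is a sum of commuting operators acting on distinct qubits, the exponential factorizes as $M(\beta)=\bigotimes_{j=1}^n e^{-i\beta X_j}$, as already written in the definition. Because the computational basis is a product basis, $\ket{x}=\bigotimes_j\ket{x_j}$, the matrix element factorizes qubit by qubit:
\begin{equation*}
\bra{y}M(\beta)\ket{x}=\prod_{j=1}^{n}\bra{y_j}e^{-i\beta X}\ket{x_j}.
\end{equation*}

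Next I compute the single-qubit matrix. Since $X^2=I$, the power series of the exponential splits into even and odd terms, giving $e^{-i\beta X}=\cos\beta\, I - i\sin\beta\, X$. In the basis $\{\ket{0},\ket{1}\}$ the identity contributes only on the diagonal and $X$ only off the diagonal. Hence $\bra{b}e^{-i\beta X}\ket{a}$ equals $\cos\beta$ when $a=b$ and $-i\sin\beta$ when $a\neq b$.

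Finally I substitute into the product. The number of indices $j$ with $x_j\neq y_j$ is by definition $d(x,y)$, and the remaining $n-d(x,y)$ indices have $x_j=y_j$. Collecting factors gives $(\cos\beta)^{n-d(x,y)}(-i\sin\beta)^{d(x,y)}$.

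None of the steps is a genuine obstacle; the only point requiring care is justifying the factorization of the exponential. That relies on the $X_j$ acting on different tensor factors, so they pairwise commute and $e^{A+B}=e^Ae^B$ applies.
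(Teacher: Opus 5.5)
Your proof is correct and follows the paper's argument essentially verbatim: factor $M(\beta)$ over the qubits, use $e^{-i\beta X}=\cos\beta\,I-i\sin\beta\,X$ to read off the single-qubit entries, and count the $d(x,y)$ disagreeing coordinates. Your explicit justification of the factorization (the $X_j$ act on distinct tensor factors and so commute) is a small addition that the paper leaves implicit in its definition of $M(\beta)$.
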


\begin{proof}
On one qubit, $$e^{-i\beta X}=\cos(\beta) I - i\sin(\beta) X$$ by trivial Taylor expansion. So
$\bra{y_j}e^{-i\beta X}\ket{x_j}$ equals $\cos\beta$ if $y_j=x_j$ and
$-i\sin\beta$ if $y_j\neq x_j$. The claim follows by taking the tensor
product over the $n$ coordinates: agreeing coordinates contribute
$\cos\beta$, disagreeing ones $-i\sin\beta$, and there are $d(x,y)$ of the
latter.
\end{proof}
Two facts in this formula do real work later. First, when
$\sin\beta\cos\beta\neq0$ the matrix element is nonzero for \emph{every} pair
$(x,y)$: the mixer connects every bitstring to every other, in particular
feasible states to infeasible ones. Second, only the mixer moves amplitude
between bitstrings; the phase separator is diagonal. Interference between
paths of different violation count
therefore requires nontrivial mixing, and disappears at $\beta\in\{0,\pi/2\}$. With angles $\beta=(\beta_1,\dots,\beta_p)$,
$\gamma=(\gamma_1,\dots,\gamma_p)$, the QAOA state for the penalized cost
\eqref{eq:cost} is
\begin{equation}
\ket{\psi(\lambda)} \;=\;
\prod_{k=1}^{p}\Big[\, M(\beta_k)\; e^{-i\gamma_k \,\mathrm{diag}(C_\lambda)}\Big]\,
\ket{+}^{\otimes n},
\label{eq:qaoa}
\end{equation}
read right-to-left: uniform superposition, then alternately imprint cost
phases and mix, $p$ times. Note that adding any constant to $C_\lambda$
multiplies the state by a global phase and changes no probability, which is why we may ignore normalization conventions in $C$.

\begin{definition}[Feasible fraction]
\label{def:F}
The object of study is the probability that a measurement of
$\ket{\psi(\lambda)}$ returns a feasible bitstring,
\begin{equation}
F(\lambda)\;=\;\bra{\psi(\lambda)}\,\Pi_\feas\,\ket{\psi(\lambda)}
\;=\;\sum_{x\in\feas}\big|\braket{x}{\psi(\lambda)}\big|^{2}.
\label{eq:F}
\end{equation}
\end{definition}

$F$ is the headline resource of penalty-based sampling: infeasible shots are
discarded by any downstream consumer, so $F$ multiplies the effective shot
budget. 

Throughout, the angles are \emph{fixed}: we study $F$ as a function of the
deployment penalty weight $\lambda$ alone, with everything else like objective,
constraints and angles held constant. This models exactly the
parameter-transfer situation, where angles trained on one instance are applied unchanged to another whose builder chose a
different $\lambda$. It is worth stating the null hypothesis this setup
refutes: since $\lambda$ enters \eqref{eq:qaoa} only through phases, one might
expect $F$ to depend on it weakly or monotonically; instead, Section~%
\ref{sec:theorem} shows the dependence is an explicit interference pattern.

\section{The $\lambda$-resonance theorem}
\label{sec:theorem}

The plan of this section: we first recall the two pieces of classical analysis
the argument rests on (trigonometric polynomials and Bernstein's inequality),
then expand the QAOA amplitude as a sum over computational-basis paths and
observe that $\lambda$ enters each path only through a phase. The theorem,
its coefficient structure, and the
corollaries all follow from that single observation.

\subsection{Analytic preliminaries}

\begin{definition}[Finite trigonometric polynomial]
\label{def:trig}
A \emph{finite trigonometric polynomial} with frequency set
$\Omega\subset\R$, $|\Omega|<\infty$, is a function
$$T(\lambda)=\sum_{\omega\in\Omega}A_\omega e^{-i\omega\lambda}$$ with complex
coefficients $A_\omega$. $T$ is real-valued on $\R$ iff $\Omega=-\Omega$ and
$A_{-\omega}=A_\omega^{*}$. Writing $\Omega_{\max}=\max_{\omega\in\Omega}
|\omega|$, the function $T$ extends to an entire function of the complex
variable $\lambda$ of \emph{exponential type} $\Omega_{\max}$ (i.e.\
$|T(\lambda)|\le \mathrm{const}\cdot e^{\Omega_{\max}|\mathrm{Im}\,\lambda|}$):
frequencies bound growth off the real axis, and, by the next lemma, slopes on
it.
\end{definition}

\begin{lemma}[Bernstein's inequality~\cite{boas1954}]
\label{lem:bernstein}
If $T$ is entire of exponential type $\sigma$ and bounded on $\R$, then
$\sup_\R|T'|\le \sigma\,\sup_\R|T|$. In particular, a finite real
trigonometric polynomial obeys
$\|T'\|_\infty\le\Omega_{\max}\|T\|_\infty$, and iterating,
$\|T''\|_\infty\le\Omega_{\max}^2\|T\|_\infty$.
\end{lemma}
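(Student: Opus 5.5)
Since the paper only ever applies Lemma~\ref{lem:bernstein} to finite trigonometric polynomials, I would prove that case directly. The general entire-function version is classical~\cite{boas1954}, and I would simply cite it. The route I have in mind is the \emph{Boas interpolation formula}: express $T'(\lambda)$ as an absolutely convergent combination of values of $T$ at shifted points, with coefficients whose absolute values sum to exactly $\sigma$. Set $\sigma=\Omega_{\max}$; we may assume $\sigma>0$, since otherwise $T$ is constant. The identity to establish is
\begin{equation*}
T'(\lambda)\;=\;\frac{\sigma}{\pi^{2}}\sum_{k\in\Z}\frac{(-1)^{k}}{(k+\tfrac12)^{2}}\;T\!\Big(\lambda+\frac{\pi(k+\tfrac12)}{\sigma}\Big),
\end{equation*}
up to a sign convention that I will check once.

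By linearity it suffices to verify the identity for a single exponential $e^{-i\omega\lambda}$ with $|\omega|\le\sigma$. Dividing out $e^{-i\omega\lambda}$ and writing $t=-\omega\pi/\sigma\in[-\pi,\pi]$, the identity becomes the scalar statement
\begin{equation*}
\sum_{k\in\Z}\frac{(-1)^{k}e^{i t(k+\frac12)}}{(k+\frac12)^{2}}\;=\;i\pi\, t .
\end{equation*}
This is a Fourier expansion. The half-integer exponentials $e^{it(k+1/2)}$ form an orthogonal basis of $L^2(-\pi,\pi)$; they are the eigenfunctions for $2\pi$-antiperiodic boundary conditions. The linear function $t$ expands in this basis with coefficients proportional to $(-1)^k/(k+\tfrac12)^2$, and the normalising constant I would fix by a direct integration. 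Two points need care:
\begin{itemize}
\item The series converges absolutely and uniformly, so the identity holds pointwise on the open interval.
\item It extends to the endpoints $t=\pm\pi$ by continuity. This matters, because it is exactly what covers the extreme frequencies $|\omega|=\Omega_{\max}$.
\end{itemize}
Getting these constants and signs right is the only real obstacle. If the Fourier computation turns out awkward, I would instead check the identity by evaluating the series in closed form with the partial-fraction expansion of $1/\cos^{2}$.

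Once the identity holds, the bound is immediate. Take absolute values and use the triangle inequality together with $\sum_{k\in\Z}(k+\tfrac12)^{-2}=\pi^{2}$. Every sample of $T$ is at most $\|T\|_\infty$, which gives
\begin{equation*}
|T'(\lambda)|\;\le\;\frac{\sigma}{\pi^{2}}\cdot\pi^{2}\,\|T\|_\infty\;=\;\Omega_{\max}\|T\|_\infty
\end{equation*}
for every real $\lambda$. Finiteness of $\|T\|_\infty$ is automatic, since $|T|\le\sum_\omega|A_\omega|$.

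For the second-derivative bound, note that $T'=\sum_\omega(-i\omega)A_\omega e^{-i\omega\lambda}$ is again a finite trigonometric polynomial. Its frequency set is contained in $\Omega$, so the same argument applies to it. Therefore $\|T''\|_\infty\le\Omega_{\max}\|T'\|_\infty\le\Omega_{\max}^{2}\|T\|_\infty$.

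Realness of $T$ plays no role here: the argument works for complex coefficients, and the real case is the special case the paper uses.
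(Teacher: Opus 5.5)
Your argument is correct. It differs from the paper mainly because the paper gives no proof at all. The paper cites Boas for the entire-function statement, then gets the trigonometric case from the remark in Definition~\ref{def:trig} that a finite trigonometric polynomial is entire of exponential type $\Omega_{\max}$ and bounded on $\R$. You instead prove directly exactly the case the paper uses, via the Riesz--Boas interpolation formula; this is essentially one of the classical proofs of the cited result, specialised to finite sums.

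Your scalar identity needs no sign correction. The coefficient $\frac{1}{2\pi}\int_{-\pi}^{\pi}t\,e^{-i(k+\frac12)t}\,dt=-\frac{i(-1)^k}{\pi(k+\frac12)^2}$ gives $\sum_{k\in\Z}(-1)^k e^{it(k+\frac12)}/(k+\frac12)^2=i\pi t$ on $(-\pi,\pi)$. The extension to $t=\pm\pi$ is legitimate because the antiperiodic extension of the odd function $t$ is continuous there. Equivalently, pairing $k$ with $-k-1$ reduces everything to the uniformly convergent triangle-wave series $\sum_{k\ge0}(-1)^k\sin((k+\frac12)t)/(k+\frac12)^2=\pi t/2$.

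The paper's route buys generality and brevity: the first sentence of the lemma, for arbitrary bounded entire functions of exponential type, comes from the citation, whereas you also only cite it. Your route buys self-containment with no complex analysis. It also handles directly the fact that the lattice frequencies $\gamma\cdot\mathbf m$ are generally incommensurate, so $F$ is almost periodic rather than periodic. That is precisely why the paper needs the entire-function version instead of the textbook Bernstein inequality for $2\pi$-periodic polynomials of integer degree. Your per-exponential check works for any real $|\omega|\le\sigma$. It carries over verbatim to complex coefficients, to $T'$ for the second-derivative bound, and to the per-coordinate degree-$\vmax$ bound used in Corollary~\ref{cor:revivals}.
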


Bernstein's inequality is the quantitative engine below: it converts
``the frequencies of $F$ are bounded'' into ``$F$ cannot vary quickly in
$\lambda$'' which gives us a lower bound on resonance width.

\subsection{Path expansion of the amplitude}

\begin{lemma}[Path expansion and phase factorization]
\label{lem:path}
For every $y\in\{0,1\}^n$ the QAOA amplitude \eqref{eq:qaoa} can be written
\begin{equation}
\braket{y}{\psi(\lambda)}
\;=\;\sum_{\mathbf{v}\in\{0,\dots,\vmax\}^p}
e^{-i\lambda\,\gamma\cdot\mathbf{v}}\; B_{\mathbf{v}}(y),
\label{eq:pathgroup}
\end{equation}
where $\gamma\cdot\mathbf v=\sum_k\gamma_k v_k$ and the \emph{partial
amplitudes}
\begin{equation}
B_{\mathbf{v}}(y)\;=\;
\frac{1}{\sqrt N}
\sum_{\substack{x_1,\dots,x_p\\ v(x_k)=v_k\ \forall k}}
\ \prod_{k=1}^{p}
\bra{x_{k+1}}M(\beta_k)\ket{x_k}\; e^{-i\gamma_k C(x_k)}
\qquad (x_{p+1}\equiv y)
\label{eq:Bv}
\end{equation}
do not depend on $\lambda$.
\end{lemma}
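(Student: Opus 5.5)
The plan is to expand the product in \eqref{eq:qaoa} by inserting a resolution of the identity $\sum_{x}\ket{x}\!\bra{x}$ between consecutive factors, then regroup the resulting sum by violation counts.

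First, the initial state is $\ket{+}^{\otimes n}=N^{-1/2}\sum_{x_1}\ket{x_1}$. In each layer $k$, the diagonal phase separator acts on a basis state $\ket{x_k}$ by the scalar $e^{-i\gamma_k C_\lambda(x_k)}$. This is the diagonal-operator fact from Section~\ref{sec:notation}. The mixer then contributes the matrix element $\bra{x_{k+1}}M(\beta_k)\ket{x_k}$. Taking the overlap with $\bra{y}$ fixes $x_{p+1}=y$. This gives
\[
\braket{y}{\psi(\lambda)}=\frac{1}{\sqrt N}\sum_{x_1,\dots,x_p}\prod_{k=1}^{p}\bra{x_{k+1}}M(\beta_k)\ket{x_k}\,e^{-i\gamma_k C(x_k)}\,e^{-i\lambda\gamma_k v(x_k)},
\]
where we have split $C_\lambda=C+\lambda v$ using Definition~\ref{def:penalty}. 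The order of the factors needs some care: the right-to-left reading of \eqref{eq:qaoa} means the separator of layer $k$ acts on $x_k$ before the mixer sends it to $x_{k+1}$. Since all the factors are scalars, no operator ordering issue remains once we are in the path sum.

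Second, the $\lambda$-dependence of each path factors out as $\prod_k e^{-i\lambda\gamma_k v(x_k)}=e^{-i\lambda\,\gamma\cdot\mathbf v}$, with $\mathbf v=(v(x_1),\dots,v(x_p))$. By Definition~\ref{def:violation}, each $v(x_k)$ lies in $\{0,\dots,\vmax\}$, so $\mathbf v$ ranges over the finite set $\{0,\dots,\vmax\}^p$. We partition the finite path sum according to the value of $\mathbf v$, which is legitimate because the sum is finite. The paths with a given $\mathbf v$ share the common phase $e^{-i\lambda\gamma\cdot\mathbf v}$, and we pull it out. What remains inside is exactly $B_{\mathbf v}(y)$ as defined in \eqref{eq:Bv}. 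That expression involves only $C$, the $\beta_k$, and the $\gamma_k$, so it is independent of $\lambda$. Any $\mathbf v$ that no path realizes simply gets $B_{\mathbf v}=0$.

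There is no real obstacle here; the statement is a bookkeeping identity. The only step I would check carefully is the index alignment between the layer ordering and $x_{p+1}\equiv y$. Lemma~\ref{lem:mixer} is not needed for the identity itself, since it only gives explicit values of the mixer matrix elements inside $B_{\mathbf v}$.
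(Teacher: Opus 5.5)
Your proposal is correct and follows the paper's proof essentially step for step: you insert resolutions of the identity, collapse each diagonal phase separator to a scalar phase with $\braket{x_1}{+^{\otimes n}}=N^{-1/2}$ and $x_{p+1}\equiv y$, split $C_\lambda=C+\lambda v$, and group the paths by their violation profile $\mathbf v$. Your extra remarks are accurate but do not change the argument: profiles that no path realizes get $B_{\mathbf v}=0$, and Lemma~\ref{lem:mixer} is not needed for the identity itself.
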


\begin{proof}
Insert a complete computational basis $\sum_{x_k}\ket{x_k}\!\bra{x_k}$ between
every pair of adjacent operators in \eqref{eq:qaoa}. Since the phase
separators are diagonal, each insertion collapses to a single phase factor,
leaving a sum over \emph{paths} $(x_1,\dots,x_p)$:
\begin{equation}
\braket{y}{\psi(\lambda)}
=\sum_{x_1,\dots,x_p}
\bra{y}M(\beta_p)\ket{x_p}\, e^{-i\gamma_p C_\lambda(x_p)}\cdots
\bra{x_2}M(\beta_1)\ket{x_1}\, e^{-i\gamma_1 C_\lambda(x_1)}\,
\frac{1}{\sqrt N},
\end{equation}
where the trailing factor is $\braket{x_1}{+^{\otimes n}}=N^{-1/2}$ and the
mixer matrix elements are given explicitly by Lemma~\ref{lem:mixer}. Now
split each cost phase using \eqref{eq:cost}:
$e^{-i\gamma_k C_\lambda(x_k)}
= e^{-i\gamma_k C(x_k)}\cdot e^{-i\gamma_k\lambda\, v(x_k)}$.
The first factor is $\lambda$-free; the second depends on the path only
through the integer $v(x_k)$. The total $\lambda$-dependence of a path is
therefore the single scalar phase
$e^{-i\lambda\sum_k\gamma_k v(x_k)}$. Grouping paths by their
\emph{violation profile} $\mathbf v=(v(x_1),\dots,v(x_p))$ yields
\eqref{eq:pathgroup} with $B_{\mathbf v}(y)$ collecting everything
$\lambda$-free, which is \eqref{eq:Bv}.
\end{proof}

The physical reading of Lemma~\ref{lem:path}: the state is a coherent sum of
finitely many ``violation-history channels,'' one per profile $\mathbf v$; the
penalty weight rotates each channel's phase at rate $\gamma\cdot\mathbf v$, and
nothing else. Whether the channels add constructively or destructively on the
feasible subspace is then a pure interference question, which is what we answer in the following theorem.

\subsection{The Theorem}

\begin{theorem}[$\lambda$-Resonance Theorem]
\label{thm:main}
Fix any depth $p$, any angles $(\beta,\gamma)\in\R^{2p}$, any bounded objective
$C$, and any integer-valued penalty $v$ with maximum value $\vmax$. Then the
feasible fraction \eqref{eq:F} is a finite real trigonometric polynomial in
$\lambda$, and its angular frequencies $\omega$ are restricted to the integer lattice spanned by the $\gamma$ parameters.
\begin{equation}
F(\lambda) \;=\; \sum_{\omega\in\Omega} A_\omega\, e^{-i\omega\lambda},
\qquad
\Omega=\Big\{\textstyle\sum_{k=1}^{p}\gamma_k m_k \;:\; m_k\in\Z,\ |m_k|\le \vmax\Big\},
\label{eq:thm}
\end{equation}
with $A_{-\omega}=A_\omega^{*}$ and $F(\lambda)\in[0,1]$ for all $\lambda$.
The coefficients
\begin{equation}
A_\omega\;=\;\sum_{\substack{\mathbf v,\mathbf v'\\
\gamma\cdot(\mathbf v-\mathbf v')=\omega}}\ \sum_{y\in\feas}
B_{\mathbf v}(y)\,B_{\mathbf v'}(y)^{*}
\label{eq:Aomega}
\end{equation}
depend on $(\beta,\gamma)$, on $C$, and on the geometry of $v$, but \emph{not}
on $\lambda$.
\end{theorem}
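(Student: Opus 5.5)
The proof is a direct computation from Lemma~\ref{lem:path}. For each feasible $y$ I substitute the expansion \eqref{eq:pathgroup} of $\braket{y}{\psi(\lambda)}$ into $|\braket{y}{\psi(\lambda)}|^2=\braket{y}{\psi(\lambda)}\,\braket{y}{\psi(\lambda)}^{*}$. The conjugate contributes $e^{+i\lambda\gamma\cdot\mathbf v'}B_{\mathbf v'}(y)^*$. This gives the finite double sum
\[
|\braket{y}{\psi(\lambda)}|^2=\sum_{\mathbf v,\mathbf v'}e^{-i\lambda\,\gamma\cdot(\mathbf v-\mathbf v')}\,B_{\mathbf v}(y)B_{\mathbf v'}(y)^*,
\]
where $\mathbf v,\mathbf v'$ range over $\{0,\dots,\vmax\}^p$. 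I then sum over $y\in\feas$ using Definition~\ref{def:F}. Grouping the pairs $(\mathbf v,\mathbf v')$ by the value $\omega=\gamma\cdot(\mathbf v-\mathbf v')$ yields exactly $F(\lambda)=\sum_\omega A_\omega e^{-i\omega\lambda}$ with $A_\omega$ as in \eqref{eq:Aomega}. Since the $B_{\mathbf v}(y)$ are $\lambda$-free by Lemma~\ref{lem:path}, so are the $A_\omega$. Their dependence on $(\beta,\gamma)$, on $C$ and on the level sets of $v$ can be read off directly from \eqref{eq:Bv}.

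Next I identify the frequency set. Put $m=\mathbf v-\mathbf v'$. Each $v_k,v'_k\in\{0,\dots,\vmax\}$, so $m_k\in\Z$ with $|m_k|\le\vmax$, and every $\omega$ that occurs lies in $\Omega$ as defined in \eqref{eq:thm}. Conversely, any lattice point of $\Omega$ that is not attained, or only attained with zero coefficient, gets $A_\omega=0$. This is harmless because the definition only asks for the frequencies to lie in $\Omega$. The set $\Omega$ is finite, with at most $(2\vmax+1)^p$ elements, so $F$ is a finite trigonometric polynomial in the sense of Definition~\ref{def:trig}. Some care is needed when distinct integer vectors $m\neq m'$ give the same $\omega$, for example when the $\gamma_k$ are rationally dependent. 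The grouping by $\omega$ in \eqref{eq:Aomega} already merges such terms, so nothing further is needed.

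For the reality conditions, $\Omega=-\Omega$ holds because swapping $\mathbf v\leftrightarrow\mathbf v'$ negates $\omega$. Applying the same swap inside \eqref{eq:Aomega} turns each summand $B_{\mathbf v}(y)B_{\mathbf v'}(y)^*$ into its complex conjugate, which gives $A_{-\omega}=A_\omega^*$. Alternatively, $F$ is a sum of squared moduli and hence real, and uniqueness of the coefficients gives the same conclusion. However, uniqueness can fail when the lattice points coincide, so I will use the explicit swap argument. Finally, $F(\lambda)\in[0,1]$ because $\ket{\psi(\lambda)}$ is a unit vector: the mixers and phase separators are unitary and $\ket{+}^{\otimes n}$ is normalized. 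So $F(\lambda)=\bra{\psi}\Pi_\feas\ket{\psi}$ is the expectation of a projector and lies in $[0,1]$.

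There is no real obstacle here. The only points needing care are the degeneracy of the map $\mathbf v\mapsto\gamma\cdot\mathbf v$, which is handled by defining $A_\omega$ as a grouped sum, and checking that the conjugation swap is consistent with that grouping.
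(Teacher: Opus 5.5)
Your proposal is correct and follows the paper's proof essentially step for step. You substitute the path expansion \eqref{eq:pathgroup} into \eqref{eq:F}, group the double sum over $(\mathbf v,\mathbf v')$ by $\omega=\gamma\cdot(\mathbf v-\mathbf v')$ with $m_k=v_k-v'_k\in\{-\vmax,\dots,\vmax\}$, get $A_{-\omega}=A_\omega^{*}$ from the swap $\mathbf v\leftrightarrow\mathbf v'$, and get $F\in[0,1]$ because it is a probability. One small correction: the uniqueness argument you set aside would also have worked, since exponentials $e^{-i\omega\lambda}$ with distinct real $\omega$ are linearly independent, so coefficients indexed by the distinct values $\omega\in\Omega$ (rather than by lattice vectors $\mathbf m$) are unique even when the $\gamma_k$ are rationally dependent; nothing in your proof depends on this.
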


\begin{proof}
Substitute \eqref{eq:pathgroup} into \eqref{eq:F}:
\begin{equation}
F(\lambda)=\sum_{y\in\feas}\big|\braket{y}{\psi(\lambda)}\big|^2
=\sum_{\mathbf{v},\mathbf{v}'} e^{-i\lambda\,\gamma\cdot(\mathbf{v}-\mathbf{v}')}
\sum_{y\in\feas} B_{\mathbf{v}}(y)\,B_{\mathbf{v}'}(y)^{*}.
\end{equation}
Setting $m_k=v_k-v'_k\in\{-\vmax,\dots,\vmax\}$ gives \eqref{eq:thm} with
coefficients \eqref{eq:Aomega}. Exchanging $\mathbf v\leftrightarrow\mathbf v'$
conjugates the inner sum and negates the frequency, giving
$A_{-\omega}=A_\omega^*$; the range $[0,1]$ is immediate since $F$ is a
probability by Definition~\ref{def:F}.
\end{proof}

Note what the theorem does and does not use: integrality of $v$, finiteness of $\vmax$, and
nothing else. Not the form of $C$, not the training procedure, not even that
the angles are good. The size of the frequency set is at most
$(2\vmax+1)^p$, independent of $n$; the exponential dimension of the problem
is entirely absorbed into the coefficients.

\begin{proposition}[Coefficient structure]
\label{prop:coeff}
Assume the frequencies $\{\gamma\cdot\mathbf m\}$ are pairwise distinct as
$\mathbf m$ ranges over the lattice. Then:\\
(i) $A_0\in\R$ and $A_0 = \lim_{T\to\infty}\frac{1}{2T}\int_{-T}^{T}
F(\lambda)\,d\lambda$, the mean feasibility over all penalty weights\\
(ii) $|A_\omega|\le A_0$ for every $\omega$; and\\
(iii) $A_0\le 1$.\\
Hence the resonance rides on a DC pedestal equal to the $\lambda$-averaged
feasibility, and no interference line can oscillate with amplitude exceeding
that pedestal.
\end{proposition}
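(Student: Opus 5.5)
The plan is to derive all three claims from one fact: the long-run $\lambda$-average extracts individual Fourier coefficients of $F$. After that, (ii) and (iii) reduce to $0\le F\le 1$, which Theorem~\ref{thm:main} already gives. I do not plan to estimate the coefficients \eqref{eq:Aomega} directly.

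\textbf{Step 1: averaging identity.} For a real frequency $\omega$ we have $\frac{1}{2T}\int_{-T}^{T}e^{-i\omega\lambda}\,d\lambda=\frac{\sin(\omega T)}{\omega T}$ when $\omega\neq0$, and this equals $1$ when $\omega=0$. Hence it tends to $\mathbf 1[\omega=0]$ as $T\to\infty$. By Theorem~\ref{thm:main}, $F$ is a finite sum over $\Omega$, so we may exchange the limit with the sum. For any fixed $\omega_0\in\Omega$, multiplying $F$ by $e^{i\omega_0\lambda}$ and averaging then gives
\[
A_{\omega_0}=\lim_{T\to\infty}\frac{1}{2T}\int_{-T}^{T}F(\lambda)\,e^{i\omega_0\lambda}\,d\lambda .
\]

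\textbf{Step 2: the role of distinctness.} Under the hypothesis, each $\omega\in\Omega$ corresponds to exactly one $\mathbf m$. In particular $\omega=0$ arises only from $\mathbf m=0$, i.e.\ $\mathbf v=\mathbf v'$. So \eqref{eq:Aomega} collapses to
\[
A_0=\sum_{\mathbf v}\sum_{y\in\feas}|B_{\mathbf v}(y)|^2\ \ge 0 .
\]
This shows $A_0\in\R$ directly. Reality also follows from $A_{-0}=A_0^*$ in Theorem~\ref{thm:main}. Together with Step 1 at $\omega_0=0$, this proves (i).

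\textbf{Step 3: parts (ii) and (iii).} For (ii), I apply Step 1 to get
\[
|A_\omega|\le \limsup_{T\to\infty}\frac{1}{2T}\int_{-T}^{T}|F(\lambda)|\,d\lambda .
\]
Since $F\ge0$ by Theorem~\ref{thm:main}, $|F|=F$, and the right-hand side is exactly $A_0$ by (i). For (iii), I use $F\le1$ pointwise, so the average in (i) is at most $1$.

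\textbf{Cross-check for (ii).} As a purely algebraic alternative, Cauchy--Schwarz on \eqref{eq:Aomega} with $\mathbf v'=\mathbf v-\mathbf m$ gives
\[
|A_\omega|\le\sum_{\mathbf v}\|B_{\mathbf v}\|_{\feas}\,\|B_{\mathbf v-\mathbf m}\|_{\feas}\le A_0 .
\]
Here $\|\cdot\|_{\feas}$ is the $\ell^2$ norm over $y\in\feas$.

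\textbf{Main obstacle.} The only delicate point is bookkeeping. The distinctness assumption ensures that the coefficient extracted by the average at a frequency $\omega$ is the single-lattice-point sum \eqref{eq:Aomega}, and that $A_0$ is the diagonal sum. Without it, several $\mathbf m$ would merge into one frequency. Statements (i)--(iii) would still hold for the merged coefficients, but the Step 2 identification of $A_0$ as a sum of squared norms would fail. I would make this explicit so the averaging argument is not misread as needing distinctness where it does not.
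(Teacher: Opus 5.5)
Your proposal is correct and follows the paper's own proof. The paper also uses Bohr-mean orthogonality to identify $A_0$ as the $\lambda$-average, bounds $|A_\omega|=|\mathcal M[F(\lambda)e^{i\omega\lambda}]|\le\mathcal M[F]=A_0$ using $F\ge0$ for (ii), and gets (iii) from $F\le1$. Your diagonal formula $A_0=\sum_{\mathbf v}\sum_{y\in\feas}|B_{\mathbf v}(y)|^2$, the Cauchy--Schwarz cross-check, and your remark that the averaging argument does not itself need the distinctness hypothesis are all correct additions the paper does not make.
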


\begin{proof}
Distinct real frequencies are orthonormal under the Bohr mean
$\mathcal M[g]=\lim_{T\to\infty}\frac{1}{2T}\int_{-T}^{T}g$:
$\mathcal M[e^{-i(\omega-\omega')\lambda}]=\delta_{\omega\omega'}$. Applying
$\mathcal M$ to $F$ gives (i). For (ii),
$|A_\omega|=\big|\mathcal M[F(\lambda)e^{i\omega\lambda}]\big|
\le \mathcal M[|F|]=\mathcal M[F]=A_0$, using $F\ge0$. (iii) follows from
$F\le1$.
\end{proof}

\subsection{The torus picture}

\begin{lemma}[Torus representation]
\label{lem:torus}
There exists a function $G:\mathbb{T}^p\to[0,1]$ on the $p$-torus, given by a
Fourier series with integer frequencies bounded by $\vmax$ in each coordinate,
such that
\begin{equation}
F(\lambda)\;=\;G\big(\gamma_1\lambda \bmod 2\pi,\;\dots,\;\gamma_p\lambda \bmod 2\pi\big).
\end{equation}
\end{lemma}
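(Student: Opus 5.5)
The plan is to lift the $\lambda$-dependence in Lemma~\ref{lem:path} from a single scalar phase to $p$ independent phases. For each profile $\mathbf v$, the phase $e^{-i\lambda\gamma\cdot\mathbf v}$ factors as $\prod_{k=1}^p e^{-i v_k(\gamma_k\lambda)}$. Replacing each $\gamma_k\lambda$ by an independent angle $\theta_k\in\mathbb{T}$, we define the lifted amplitude
\begin{equation*}
\Psi_y(\theta)=\sum_{\mathbf v\in\{0,\dots,\vmax\}^p} e^{-i\sum_k v_k\theta_k}\,B_{\mathbf v}(y).
\end{equation*}
It is $2\pi$-periodic in each $\theta_k$ because every $v_k$ is an integer; this is the only place integrality of $v$ is used. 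We then set $G(\theta)=\sum_{y\in\feas}|\Psi_y(\theta)|^2$. Expanding the square exactly as in the proof of Theorem~\ref{thm:main} gives a finite Fourier series $G(\theta)=\sum_{\mathbf m}c_{\mathbf m}e^{-i\mathbf m\cdot\theta}$. Here $c_{\mathbf m}=\sum_{\mathbf v-\mathbf v'=\mathbf m}\sum_{y\in\feas}B_{\mathbf v}(y)B_{\mathbf v'}(y)^*$ and $|m_k|\le\vmax$. Substituting $\theta_k=\gamma_k\lambda \bmod 2\pi$ recovers $\Psi_y=\braket{y}{\psi(\lambda)}$ by \eqref{eq:pathgroup}, and hence $F(\lambda)=G(\gamma\lambda)$. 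Reality of $G$ is automatic, since it is a sum of squared moduli. Note that $c_{\mathbf m}$ are the lattice-resolved versions of the $A_\omega$ in \eqref{eq:Aomega}, before frequencies with equal $\gamma\cdot\mathbf m$ are merged.

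The main obstacle is the range claim $G\in[0,1]$ on the whole torus, not just along the line $\{\gamma\lambda\}$. Nonnegativity is immediate. The upper bound should not be obtained from density of the line, because the $\gamma_k$ may be rationally dependent. Instead, I will realize $\Psi_y(\theta)$ as an amplitude of a genuine unitary circuit. Consider
\begin{equation*}
\ket{\Psi(\theta)}=\prod_{k=1}^p\Big[M(\beta_k)\,e^{-i\gamma_k\mathrm{diag}(C)}\,e^{-i\theta_k\mathrm{diag}(v)}\Big]\ket{+}^{\otimes n}.
\end{equation*}
Each factor is unitary, since $\mathrm{diag}(v)$ and $\mathrm{diag}(C)$ are real diagonal and commute. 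Rerunning the path insertion of Lemma~\ref{lem:path} on this circuit gives $\braket{y}{\Psi(\theta)}=\Psi_y(\theta)$, with the same $B_{\mathbf v}(y)$. Therefore $G(\theta)=\bra{\Psi(\theta)}\Pi_\feas\ket{\Psi(\theta)}$ is the expectation of a projector in a normalized state, so $G(\theta)\in[0,1]$ for every $\theta\in\mathbb{T}^p$.

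The remaining points are bookkeeping. First, $G$ is well defined on $\mathbb{T}^p$ because its Fourier frequencies are integers. Second, the bound $|m_k|\le\vmax$ follows from $v_k,v'_k\in\{0,\dots,\vmax\}$. Third, the map $\lambda\mapsto(\gamma_k\lambda \bmod 2\pi)_k$ is compatible with $G$ by that same periodicity.
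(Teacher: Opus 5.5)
Your proof is correct, and it proves more than the paper's own argument. The paper works at the level of coefficients: it factors each frequency of Theorem~\ref{thm:main} as $e^{-i\lambda\sum_k\gamma_k m_k}=\prod_k e^{-im_k\theta_k}$ with $\theta_k=\gamma_k\lambda$, and sets $G(\theta)=\sum_{\mathbf m}A_{\mathbf m}e^{-i\mathbf m\cdot\theta}$. This gives $F(\lambda)=G(\gamma_1\lambda,\dots,\gamma_p\lambda)$ and $|m_k|\le\vmax$ at once. It leaves implicit which coefficients $A_{\mathbf m}$ are meant: Theorem~\ref{thm:main} defines only the merged $A_\omega$, and when several $\mathbf m$ share the value $\gamma\cdot\mathbf m$, $F$ does not determine the split. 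It also never justifies that $G$ takes values in $[0,1]$ off the line $\{\gamma\lambda\}$. You lift at the amplitude level instead. That pins the coefficients down as the lattice-resolved $c_{\mathbf m}$ and makes $G=\sum_{y\in\feas}|\Psi_y|^2$ manifestly non-negative. Your auxiliary circuit, with an independent penalty phase $\theta_k$ in each layer, then gives $G\le 1$ on the whole torus for arbitrary $\gamma$. This includes rationally dependent $\gamma$, where the line is not dense and a continuity-plus-density argument fails. The step matters downstream: the proof of Corollary~\ref{cor:revivals}(ii) applies Bernstein's inequality to $\theta_k\mapsto G(\theta)$ with the other coordinates frozen at arbitrary values. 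That needs exactly the whole-torus bound $\|G\|_\infty\le 1$ you establish. The paper's route buys brevity and direct reuse of the Theorem~\ref{thm:main} expansion. Yours buys a complete proof of the stated range, plus a useful reading of $G(\theta)$ as the feasible fraction of a QAOA circuit whose layers carry independent penalty phases.
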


\begin{proof}
Each frequency in \eqref{eq:thm} factors as
$e^{-i\lambda\sum_k\gamma_k m_k}=\prod_k e^{-i m_k \theta_k}$ with
$\theta_k=\gamma_k\lambda$; define $G(\theta)=\sum_{\mathbf m}
A_{\mathbf m}\,e^{-i\mathbf m\cdot\theta}$.
\end{proof}

All $\lambda$-dependence therefore lives on a straight line winding around a
fixed torus at speeds $d\theta_k/d\lambda=\gamma_k$. Every phenomenon in this
paper is a statement about that line.

\paragraph{Why ``resonance''.}
Theorem~\ref{thm:main} is a statement of spectral support: it
names the frequencies $F$ may contain and says nothing about a peak. The word
\emph{resonance} is earned in two steps.

First, training supplies an operating
point: an optimizer run at build weight $\lambda_0$ selects angles for which
the violation-history channels of Lemma~\ref{lem:path} interfere
constructively on $\feas$ \emph{at} $\lambda_0$, i.e.\ it places $\lambda_0$
on a crest of the trigonometric polynomial.

Second, the theorem forces the full
phenomenology the word implies around any such crest: detuning $\lambda$
rotates each channel at its own rate $\gamma\cdot(\mathbf v-\mathbf v')$, so
the response falls on \emph{both} sides of $\lambda_0$; band-limitation to
$|\omega|\le\Omega_{\max}$ bounds how fast it can fall, giving the peak a
minimum width proportional to $1/\Omega_{\max}$
(Corollary~\ref{cor:width}); and because the frequencies lie on a lattice
generated by finitely many $\gamma_k$ rather than filling an interval, the
interference pattern nearly repeats and the crest recurs
(Corollary~\ref{cor:revivals}). A peak at the matched parameter, a width set
by the inverse of the total bandwidth, and structured recurrences are the
defining signatures of a resonance; each is a corollary below, with the phase
budget $\vmax\sum_k|\gamma_k|$ in the role of the bandwidth, making the analogy is
exact.

\subsection{Corollaries}
\label{sec:corollaries}

\begin{corollary}[Resonance width; Bernstein bound]
\label{cor:width}
Let $\Omega_{\max}=\vmax\sum_{k}|\gamma_k|$. Then for all $\lambda,\Delta$,
\begin{equation}
\big|F(\lambda+\Delta)-F(\lambda)\big| \;\le\; \Omega_{\max}\,|\Delta| ,
\end{equation}
so a drop of size $\delta$ from any operating point requires a detuning
\begin{equation}
|\Delta\lambda| \;\ge\; \frac{\delta}{\Omega_{\max}}
\;=\;\frac{\delta}{\vmax\sum_k|\gamma_k|}.
\label{eq:widthbound}
\end{equation}
The resonance can be no narrower than the inverse of the trained \emph{phase
budget} $\vmax\sum_k|\gamma_k|$; conversely, small trained $|\gamma|$
\emph{forces} a wide, mismatch-tolerant plateau. Near a peak $\lambda^\ast$
the local shape is
$F(\lambda^\ast+\Delta)= F(\lambda^\ast)-\tfrac12\kappa\Delta^2+O(\Delta^3)$
with curvature $\kappa=\sum_\omega \omega^2\,\mathrm{Re}\,[A_\omega
e^{-i\omega\lambda^\ast}]\in[0,\Omega_{\max}^2]$.
\end{corollary}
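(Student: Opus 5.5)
The plan is to derive everything from Theorem~\ref{thm:main} combined with Bernstein's inequality (Lemma~\ref{lem:bernstein}), followed by a Taylor expansion.

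\textbf{Bounding the maximal frequency.} Every $\omega\in\Omega$ in \eqref{eq:thm} has the form $\omega=\sum_k\gamma_k m_k$ with integers $|m_k|\le\vmax$. The triangle inequality gives $|\omega|\le\sum_k|\gamma_k||m_k|\le\vmax\sum_k|\gamma_k|$. So the largest frequency actually present in $F$ is at most the quantity $\Omega_{\max}$ of the corollary. By Definition~\ref{def:trig}, $F$ is then entire of exponential type at most $\Omega_{\max}$, and it is bounded on $\R$ because $F(\lambda)\in[0,1]$.

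\textbf{Lipschitz bound and width.} Lemma~\ref{lem:bernstein} gives $\|F'\|_\infty\le\Omega_{\max}\|F\|_\infty\le\Omega_{\max}$. The mean value theorem then yields $|F(\lambda+\Delta)-F(\lambda)|\le\Omega_{\max}|\Delta|$ for all $\lambda,\Delta$. A sharper constant $\Omega_{\max}/2$ is available by applying Bernstein to $F-\tfrac12$, which has the same derivative and sup-norm at most $\tfrac12$, but it is not needed. The width statement \eqref{eq:widthbound} is the contrapositive: if $F$ drops by $\delta$ over a detuning $\Delta\lambda$, then $\delta\le\Omega_{\max}|\Delta\lambda|$.

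\textbf{Local shape at a peak.} Take $\lambda^\ast$ to be a peak, meaning a local maximum of the smooth function $F$. Then $F'(\lambda^\ast)=0$ and $F''(\lambda^\ast)\le 0$. Differentiating \eqref{eq:thm} termwise twice gives
\[
F''(\lambda^\ast)=-\sum_\omega\omega^2A_\omega e^{-i\omega\lambda^\ast}.
\]
Pairing $\omega$ with $-\omega$ and using $A_{-\omega}=A_\omega^*$ shows this sum is real. Hence it equals $-\sum_\omega\omega^2\,\mathrm{Re}[A_\omega e^{-i\omega\lambda^\ast}]=-\kappa$.

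Taylor's theorem with Lagrange remainder then gives $F(\lambda^\ast+\Delta)=F(\lambda^\ast)-\tfrac12\kappa\Delta^2+O(\Delta^3)$. The remainder constant is controlled by the thrice-iterated Bernstein bound $\|F'''\|_\infty\le\Omega_{\max}^3$. Finally, $\kappa\ge0$ follows from the local-maximum condition, and $\kappa=|F''(\lambda^\ast)|\le\Omega_{\max}^2$ follows from the iterated Bernstein bound in Lemma~\ref{lem:bernstein}.

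\textbf{Main obstacle.} The only delicate point is the first step. One must check that the exponential type entering Bernstein's inequality is the largest frequency with a nonzero coefficient. That frequency is bounded by the lattice bound $\vmax\sum_k|\gamma_k|$, and cancellations can only lower it. One must also interpret ``peak'' as a local maximum, since this is what guarantees both $F'(\lambda^\ast)=0$ and the sign $\kappa\ge0$. Everything else is routine calculus.
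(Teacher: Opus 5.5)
Your proposal is correct and follows the paper's proof essentially step for step. It bounds the frequencies by $|\omega|\le\vmax\sum_k|\gamma_k|$ via the triangle inequality, combines Bernstein with the mean value theorem for the Lipschitz and width bounds, and uses termwise double differentiation, the local-maximum condition and iterated Bernstein to get $\kappa\in[0,\Omega_{\max}^2]$ and the Taylor remainder. Your side remark that applying Bernstein to $F-\tfrac12$ sharpens the Lipschitz constant to $\Omega_{\max}/2$ is also valid, though the stated bound does not need it.
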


\begin{proof}
Every frequency in \eqref{eq:thm} obeys
$|\omega|=|\sum_k\gamma_k m_k|\le\vmax\sum_k|\gamma_k|=\Omega_{\max}$, so $F$
is a finite real trigonometric polynomial of exponential type at most
$\Omega_{\max}$, bounded by $1$ on $\R$ (Theorem~\ref{thm:main}). Bernstein's
inequality (Lemma~\ref{lem:bernstein}) gives $\|F'\|_\infty\le\Omega_{\max}$,
and the mean value theorem yields the first display; rearranging gives
\eqref{eq:widthbound}. For the local shape, differentiate \eqref{eq:thm}
twice: $F''(\lambda)=-\sum_\omega\omega^2\,\mathrm{Re}[A_\omega
e^{-i\omega\lambda}]$, so the stated $\kappa$ is $-F''(\lambda^\ast)$, which
is non-negative at a local maximum; $|\kappa|\le\Omega_{\max}^2$ by iterating
Bernstein. Taylor's theorem with the (bounded) third derivative gives the
remainder.
\end{proof}

\begin{corollary}[Revivals]
\label{cor:revivals}
(i) \emph{Exact revivals:} if $\gamma_k\Delta\in 2\pi\Z$ for every
$k=1, \dots,p$, then $F(\lambda_0+\Delta)=F(\lambda_0)$ for every $\lambda_0$.
(ii) \emph{Partial revivals:} if $\gamma_k\Delta\in2\pi\Z$ for the layers
$k\in S$ only, then
\begin{equation}
\big|F(\lambda_0+\Delta)-F(\lambda_0)\big|\;\le\;
\vmax \sum_{k\notin S}\ \mathrm{dist}\!\left(\gamma_k\Delta,\,2\pi\Z\right).
\label{eq:partialrevival}
\end{equation}
In particular, when a single layer dominates the phase budget, near-revivals
occur at $\lambda_0\pm 2\pi j/\gamma_{\mathrm{dom}}$, $j\in\Z$, degraded only
by the sub-dominant layers' residual phases. A monotone ``penalty too weak /
too strong'' account of transfer failure predicts no such structure; observing
revivals discriminates the interference mechanism from all energetic
explanations.
\end{corollary}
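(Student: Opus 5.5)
The plan is to work entirely in the torus picture of Lemma~\ref{lem:torus}, writing $F(\lambda)=G(\gamma_1\lambda,\dots,\gamma_p\lambda)$ with $G(\theta)=\sum_{\mathbf m}A_{\mathbf m}e^{-i\mathbf m\cdot\theta}$ and $|m_k|\le\vmax$. Every entry of $\theta$ is read modulo $2\pi$, since $G$ is $2\pi$-periodic in each coordinate (integer frequencies). Then $F(\lambda_0+\Delta)=G(\theta^0+\gamma\Delta)$ with $\theta^0_k=\gamma_k\lambda_0$.

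\textbf{Part (i).} If $\gamma_k\Delta\in2\pi\Z$ for all $k$, then $\theta^0+\gamma\Delta\equiv\theta^0$ coordinatewise mod $2\pi$. Hence $F(\lambda_0+\Delta)=F(\lambda_0)$ immediately.

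\textbf{Part (ii).} For $k\in S$ the shift $\gamma_k\Delta$ vanishes mod $2\pi$. For $k\notin S$, let $\delta_k\in(-\pi,\pi]$ be the representative of $\gamma_k\Delta$ mod $2\pi$, so that $|\delta_k|=\mathrm{dist}(\gamma_k\Delta,2\pi\Z)$. I would then move from $\theta^0$ to $\theta^0+\delta$ one coordinate at a time, changing only the coordinates $k\notin S$, and telescope:
\[
|G(\theta^0+\delta)-G(\theta^0)|\le\sum_{k\notin S}\big|G(\theta^{(k)})-G(\theta^{(k-1)})\big|,
\]
where consecutive points $\theta^{(k-1)},\theta^{(k)}$ differ only in coordinate $k$, by $\delta_k$.

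Along such a coordinate line, $t\mapsto G(\dots,t,\dots)$ is a real trigonometric polynomial in $t$ with integer frequencies $|m_k|\le\vmax$. Bernstein's inequality (Lemma~\ref{lem:bernstein}) therefore gives $|\partial_{\theta_k}G|\le\vmax\sup|G|$, and the mean value theorem bounds each telescoping term by $\vmax\,\sup|G|\,|\delta_k|$. This yields \eqref{eq:partialrevival} provided $\sup_{\mathbb T^p}|G|\le1$.

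\textbf{Main obstacle: bounding $G$ off the winding line.} The bound $F\le1$ is only known on the line $\theta=\gamma\lambda$, which need not be dense, for instance when the $\gamma_k$ are rationally related. So the step needing care is justifying that $G$ takes values in $[0,1]$ on the whole torus, as asserted in Lemma~\ref{lem:torus}. I would establish this directly from Lemma~\ref{lem:path}. For arbitrary $\theta\in\R^p$, the expression
\[
G(\theta)=\sum_{y\in\feas}\Big|\sum_{\mathbf v}e^{-i\theta\cdot\mathbf v}B_{\mathbf v}(y)\Big|^2
\]
is exactly the feasible probability of the unitary circuit obtained from \eqref{eq:qaoa} by replacing the penalty phase $e^{-i\gamma_k\lambda v}$ in layer $k$ with $e^{-i\theta_k v}$, using the diagonal unitary $e^{-i\theta_k\,\mathrm{diag}(v)}$. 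This construction is valid even when some $\gamma_k=0$. That circuit maps a normalized state to a normalized state, so $0\le G\le1$ everywhere.

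\textbf{Near-revivals.} The final remark follows by taking $S=\{\mathrm{dom}\}$ and $\Delta=2\pi j/\gamma_{\mathrm{dom}}$. The bound then reduces to $\vmax\sum_{k\ne\mathrm{dom}}\mathrm{dist}(2\pi j\gamma_k/\gamma_{\mathrm{dom}},2\pi\Z)$, which is small when the sub-dominant $\gamma_k$ are small or nearly commensurate with $\gamma_{\mathrm{dom}}$.
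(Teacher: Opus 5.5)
Your proof is correct and follows the paper's argument. Part (i) is periodicity on the torus of Lemma~\ref{lem:torus}, and part (ii) reduces $\gamma_k\Delta$ to $\delta_k\in(-\pi,\pi]$, applies Bernstein's inequality coordinatewise to get $\|\partial G/\partial\theta_k\|_\infty\le\vmax$, and telescopes; in addition, by realizing $G(\theta)$ for arbitrary $\theta$ as the feasible probability of the modified unitary circuit with penalty phases $e^{-i\theta_k\,\mathrm{diag}(v)}$, you correctly justify $0\le G\le 1$ off the winding line, which the paper's proof of Lemma~\ref{lem:torus} asserts but does not prove.
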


\begin{proof}
By Lemma~\ref{lem:torus}, $F(\lambda)=G(\theta(\lambda))$ with
$\theta_k(\lambda)=\gamma_k\lambda \bmod 2\pi$. (i) The hypothesis makes
$\theta(\lambda_0+\Delta)=\theta(\lambda_0)$ on the torus. (ii) Write
$\delta_k=\gamma_k\Delta \bmod 2\pi$ taken in $(-\pi,\pi]$, so $\delta_k=0$
for $k\in S$ and $|\delta_k|=\mathrm{dist}(\gamma_k\Delta,2\pi\Z)$ otherwise.
For fixed values of the other coordinates, $\theta_k\mapsto G(\theta)$ is a
univariate trigonometric polynomial of degree at most $\vmax$
(Lemma~\ref{lem:torus}) bounded by $1$, so Bernstein's inequality per
coordinate gives $\|\partial G/\partial\theta_k\|_\infty\le\vmax$. Moving from
$\theta(\lambda_0)$ to $\theta(\lambda_0)+\delta$ one coordinate at a time and
telescoping yields \eqref{eq:partialrevival}.
\end{proof}


\begin{corollary}[Instance-size scaling]
\label{cor:vmax}
Both bounds above degrade linearly in $\vmax$: the minimum-width guarantee
\eqref{eq:widthbound} shrinks like $1/\vmax$, and the achievable per-coordinate
sharpness of $G$ grows like $\vmax$. For one-hot penalties \eqref{eq:onehot},
$\vmax=\sum_j\binom{|G_j|}{2}$ grows with the number of constraint groups. At fixed angles, larger
instances therefore have \emph{narrower} $\lambda$-resonances. Penalty-based
transfer becomes progressively less forgiving at scale even when the
structural transfer is perfect.
\end{corollary}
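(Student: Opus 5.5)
This corollary is essentially a bookkeeping consequence of Corollaries~\ref{cor:width} and~\ref{cor:revivals}, so the plan is to trace exactly where $\vmax$ enters each bound and then specialize to the one-hot penalty \eqref{eq:onehot}.

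\emph{Width claim.} Corollary~\ref{cor:width} uses $\Omega_{\max}=\vmax\sum_k|\gamma_k|$, which comes from the frequency set $\Omega$ of Theorem~\ref{thm:main}. With the angles held fixed, $\sum_k|\gamma_k|$ is a constant. Hence the guaranteed detuning $\delta/\Omega_{\max}$ in \eqref{eq:widthbound} equals a $\vmax$-independent constant divided by $\vmax$, i.e.\ it shrinks like $1/\vmax$. In the same way, the curvature ceiling $\Omega_{\max}^2$ grows like $\vmax^2$.

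\emph{Per-coordinate sharpness claim.} In Lemma~\ref{lem:torus}, the function $G$ has integer Fourier frequencies with $|m_k|\le\vmax$ in each coordinate. The proof of Corollary~\ref{cor:revivals} then uses Bernstein's inequality per coordinate to get $\|\partial G/\partial\theta_k\|_\infty\le\vmax$. Both the ceiling on the slope and the partial-revival error bound \eqref{eq:partialrevival} are therefore linear in $\vmax$. I would also note that this is the order that can actually occur: the frequency support permits $G$ to contain a term $\cos(\vmax\theta_k)$, so the bound does not a priori force a smaller slope. This is the honest content of the word ``achievable''.

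\emph{One-hot specialization.} Each group $G_j$ contributes $\binom{|G_j|}{2}\ge 1$ whenever $|G_j|\ge2$, and all pairs can be violated simultaneously by setting every bit to $1$. So $\vmax=\sum_j\binom{|G_j|}{2}$ is attained, and it is nondecreasing and unbounded as groups are added. This matches the $6\to10$ example in the paper.

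\emph{Main obstacle.} The step I expect to be delicate is the interpretive sentence ``larger instances therefore have narrower resonances.'' Strictly, Corollary~\ref{cor:width} only gives a \emph{lower} bound on width, so a shrinking lower bound does not force the actual width to shrink. I would state the corollary as a statement about the guarantee and about the attainable bandwidth, namely that the spectral support of $F$ widens linearly in $\vmax$. The ``narrowing'' conclusion would be presented as what the bound permits and what the experiments test, not as a derived inequality.

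Finally, moving to a larger instance also changes $n$ and $C$. These enter only through the coefficients $A_\omega$, which the theorem leaves $\lambda$-independent but otherwise uncontrolled. I would say explicitly that the scaling refers to the frequency envelope at fixed angles, and holds regardless of the ``structural transfer'' quality.
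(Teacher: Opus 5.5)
Your proposal is correct and follows the paper's route. The paper gives no separate proof: it reads the $\vmax$-dependence off $\Omega_{\max}=\vmax\sum_k|\gamma_k|$ in Corollary~\ref{cor:width} and off the per-coordinate Bernstein bound $\|\partial G/\partial\theta_k\|_\infty\le\vmax$ in Corollary~\ref{cor:revivals}, exactly as you do. Your caveat is also accurate and more careful than the paper's phrasing: ``narrower resonances'' follows only for the guaranteed lower-bound width, not as an inequality on the actual width.
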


\begin{corollary}[Limits and training-independence]
\label{cor:limits}
(i) At $\gamma=0$, $F\equiv|\feas|/2^n$: the uniform-sampling baseline,
$\lambda$-independent. (ii) Nothing in Theorem~\ref{thm:main} references the
provenance of $(\beta,\gamma)$: the law binds angles found by any training
procedure, and equally binds hand-set or random angles. Training only selects
the operating point: a good optimizer at build weight $\lambda_0$ places
$\lambda_0$ on a high ridge of $F$.
\end{corollary}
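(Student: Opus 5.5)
The statement to prove is Corollary~\ref{cor:limits}, which has two parts. Part (i) is a direct computation. Part (ii) is a remark about the logical dependencies of Theorem~\ref{thm:main}.

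For (i), I would set all $\gamma_k=0$ in \eqref{eq:qaoa}. Then every phase separator $e^{-i\gamma_k\,\mathrm{diag}(C_\lambda)}$ is the identity, so the state becomes
\[
\ket{\psi(\lambda)}=\prod_k M(\beta_k)\ket{+}^{\otimes n}.
\]
The key observation is that $\ket{+}^{\otimes n}$ is an eigenvector of every $X_j$ with eigenvalue $+1$. Hence $M(\beta)\ket{+}^{\otimes n}=e^{-i n\beta}\ket{+}^{\otimes n}$, which is the uniform superposition up to a global phase. Then $|\braket{x}{\psi}|^2=1/N$ for every $x$. Summing over $\feas$ in \eqref{eq:F} gives $F\equiv|\feas|/2^n$. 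This value does not depend on $\lambda$, since $\lambda$ entered only through the now-trivial phases.

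As a consistency check, I would also read this off Theorem~\ref{thm:main}. At $\gamma=0$ every frequency $\gamma\cdot\mathbf m$ equals $0$, so $\Omega=\{0\}$ and $F=A_0$ is constant, in agreement with the direct computation.

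The one subtlety I expect is that the eigenvector argument needs the mixer's generator $\sum_j X_j$ to fix the initial state. If I preferred to avoid it, I could instead use Lemma~\ref{lem:mixer}: summing $\bra{y}M(\beta)\ket{x}$ over all $x$ gives
\[
\sum_x\bra{y}M(\beta)\ket{x}=(\cos\beta-i\sin\beta)^n=e^{-in\beta},
\]
which is independent of $y$ and so yields the same conclusion.

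For (ii), the argument is an audit of hypotheses. The proof of Theorem~\ref{thm:main} goes through Lemma~\ref{lem:path} and uses only three ingredients: the circuit form \eqref{eq:qaoa} at arbitrary real $(\beta,\gamma)$, integrality of $v$, and boundedness of $C$. No property of the angles beyond being real numbers is invoked. The theorem therefore holds for angles of any provenance, whether trained, hand-set or random.

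The statement that training ``selects the operating point'' is interpretive rather than a mathematical claim. The mathematical content I would extract is this: if the angles maximize $F$ (or a proxy that correlates with it) at the build weight $\lambda_0$, then $\lambda_0$ is a local maximum of the fixed trigonometric polynomial $F$. That is exactly the setting in which Corollary~\ref{cor:width} applies.
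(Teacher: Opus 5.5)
Your proof of (i) is the paper's own argument: with all $\gamma_k=0$ the phase separators are the identity, $X_j\ket{+}=\ket{+}$ makes each mixer act by the global phase $e^{-in\beta}$, and the output is uniform, so $F=|\feas|/2^n$. Your treatment of (ii) as an audit of the hypotheses of Theorem~\ref{thm:main} is also exactly what the paper does. The binomial-sum variant via Lemma~\ref{lem:mixer} and the $\Omega=\{0\}$ check are correct extras.

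What you should drop is the ``mathematical content'' you extract from the training sentence. Angles that maximize $F$ at $\lambda_0$ need not make $\lambda_0$ a local maximum of $\lambda\mapsto F(\lambda)$ at those angles, nor even a critical point.

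To see this, write $F(\lambda;\gamma,C)$ with $\beta$ fixed. The layer-$k$ phase is $\gamma_k C(x)+\gamma_k\lambda v(x)$, so one has the exact identity
\[
F(\lambda;\gamma,C)=F\!\left(\lambda_0;\tfrac{\lambda}{\lambda_0}\gamma,\tfrac{\lambda_0}{\lambda}C\right).
\]
Detuning $\lambda$ is therefore a rescaling of $\gamma$ \emph{together with} a rescaling of the objective. Differentiating at $\lambda=\lambda_0$ gives
\[
F'(\lambda_0)=\lambda_0^{-1}\sum_k\gamma_k\,\partial_{\gamma_k}F-\lambda_0^{-1}\,\frac{d}{ds}F(\lambda_0;\gamma,sC)\quad\text{at } s=1.
\]
Angle-stationarity kills only the first sum. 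What remains is the sensitivity of feasibility to scaling $C$, and nothing forces it to vanish. It vanishes automatically only in degenerate cases such as $F(\lambda_0)=1$ or constant $C$. With a proxy training loss in place of $F$, as in the paper, even less follows.

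This is why the paper lists peak placement as observed, not proven (Section~\ref{sec:limitations}). The corollary as proved does not need it, and the Lipschitz bound of Corollary~\ref{cor:width} holds at every $\lambda$ regardless.
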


\begin{proof}[Proof of (i)]
With $\gamma=0$ every phase separator in \eqref{eq:qaoa} is the identity, and
since $X\ket{+}=\ket{+}$ implies
$M(\beta)\ket{+}^{\otimes n}=e^{-in\beta}\ket{+}^{\otimes n}$, the prepared
state is $\ket{+}^{\otimes n}$ up to a global phase. Its measurement
distribution is uniform, so $F=|\feas|/2^n$ for every $\lambda$. Claim (ii) is
a property of the statement of Theorem~\ref{thm:main}, whose hypotheses
mention the angles only as fixed real numbers.
\end{proof}



\section{Exact-statevector confirmation}
\label{sec:experiments}

\subsection{Testbed}
\label{sec:testbed}

The instance is a multi-user resource-allocation QUBO from a semantic-network
scheduling pipeline: $2$ cells $\times$ $2$ users
each $\times$ $3$ shared resource blocks, plus one rate bit and one power bit
per user: $n=20$ qubits, $N=2^{20}$ states, one-hot penalty
\eqref{eq:onehot} with $r=6$ groups of size $2$, hence $\vmax=6$;
$|\feas|=186{,}624$ (uniform baseline $F=0.178$). The objective $C$ is a
Dirichlet-weighted sum of three least-squares quadratic surrogates (quality,
latency, energy); the builder's calibrated-and-verified penalty weight is
$\lambda_0=2.19597$. Everything is evaluated with an exact statevector
simulator with no shots, no sampling noise and no hardware, so every reported
$F(\lambda)$ is the exact quantity in \eqref{eq:F}.

Two depth-$2$ angle sets, both previously trained on this instance at
$\lambda_0$ by multistart Nelder--Mead on the exact simulator, and both
committed to the repository \emph{before} this study:

\begin{center}
\begin{tabular}{lccccc}
\toprule
angle set & budget & $\beta$ & $\gamma$ & $\sum_k|\gamma_k|$ & train loss\\
\midrule
sharp & $8\times150$ & $(1.300,\,-1.736)$ & $(3.512,\ 0.778)$ & $4.290$ & $0.0187$\\
blunt & $2\times60$  & $(-0.030,\,-0.418)$ & $(0.537,\ 0.075)$ & $0.612$ & $0.0209$\\
\bottomrule
\end{tabular}
\end{center}

The two optima are \emph{equivalent in training quality} yet differ by
$7\times$ in phase budget which is precisely the controlled comparison
Corollary~\ref{cor:width} calls for. We sweep the deployment weight
$\lambda\in[0,5]$ (step $0.05$; both sets; the uniform scalarization weight
plus two Dirichlet draws) and $\lambda\in[0,12]$ (step $0.04$; sharp set) for
spectroscopy. On-instance detuning is the point of the design: the instance,
the objective, and the constraint structure are held fixed, so \emph{any}
structural account of transfer predicts a flat line, and only the phase
mechanism predicts structure in $\lambda$. If the theory is accurate, this is what we would expect:

\begin{itemize}
    \item \textbf{P1} (peak): $F$ peaks at, or within one oscillation ripple of, $\lambda_0$, for both angle sets.
    \item \textbf{P2} (width): the blunt resonance is wider than the sharp one by roughly the phase-budget ratio $4.290/0.612\approx 7$.
    \item \textbf{P3} (fingerprint): the sharp curve shows revivals at $\lambda_0\pm2\pi/\gamma_1=0.41$ and $3.99$, and the spectrum of $F(\lambda)$ has lines only on the lattice $\{m_1\gamma_1+m_2\gamma_2\}$ of the trained angles.
\end{itemize}

\subsection{Results}

\paragraph{P1: Peak Location}
Blunt: global maximum at $\lambda=2.20$ (grid step $0.05$ from
$\lambda_0=2.196$), $F=0.999$. Sharp: $F(\lambda_0)=0.960$; the revival at
$\lambda=4.00$ ties it at $0.963$, within one ripple, exactly the caveat discussed in
Section~\ref{sec:limitations}. The trained weight sits on the peak
ridge for both sets. 

\begin{figure}[h]
\centering
\includegraphics[width=0.86\linewidth]{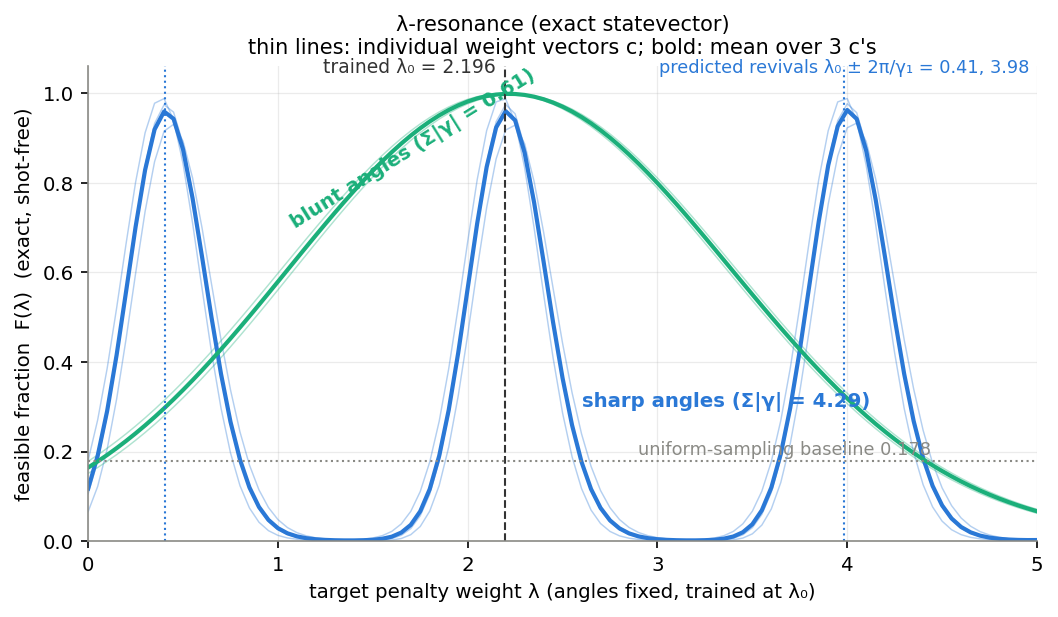}
\caption{Exact feasible fraction $F(\lambda)$ at fixed trained angles on the
20-qubit instance (thin: individual scalarization weights; bold: mean of
three). The sharp angle set ($\sum|\gamma|=4.29$) produces a narrow resonance
at the trained $\lambda_0=2.196$ \emph{plus both predicted revivals} at
$\lambda_0\pm 2\pi/\gamma_1=0.41,\ 3.99$. The blunt set
($\sum|\gamma|=0.61$) produces a single wide dome centered on $\lambda_0$; its
first revival ($\lambda_0+2\pi/0.537\approx13.9$) lies far outside the window.
Dotted horizontal line: uniform-sampling baseline $0.178$.}
\label{fig:curves}
\end{figure}


\paragraph{P2: Width Scaling}
Half-width at half-prominence above the uniform floor: sharp $0.45$, blunt
$2.45$: a factor $5.4$ against the predicted $\approx 7$. Cross-sections of the mean curves:

\begin{center}
\begin{tabular}{ccccl}
\toprule
$\lambda$ & $\Delta\lambda$ & $F$ (sharp) & $F$ (blunt) & shot-based 28q counterpart\\
\midrule
1.55 & $-0.65$ & 0.005 & 0.848 & 0.000 (gate failure)\\
1.95 & $-0.25$ & 0.420 & 0.973 & sharp 0.254; blunt 0.957\\
2.20 & $\phantom{-}0.00$ & 0.960 & 0.999 & 0.964 ($\lambda$-matched)\\
2.50 & $+0.30$ & 0.364 & 0.970 & 0.130\\
4.00 & $+1.80$ & 0.963 (revival) & 0.319 & ---\\
\bottomrule
\end{tabular}
\end{center}

The last column, discussed in Section~\ref{sec:anomaly}, is the independently
measured 28-qubit \emph{transfer} probe (different instance, different size,
shot-based): it tracks the exact on-instance curve almost quantitatively at the
matched point and is systematically narrower off-peak which is the direction
Corollary~\ref{cor:vmax} predicts, since the larger instance has $\vmax=10$
versus $6$.

\begin{figure}[h]
\centering
\includegraphics[width=0.86\linewidth]{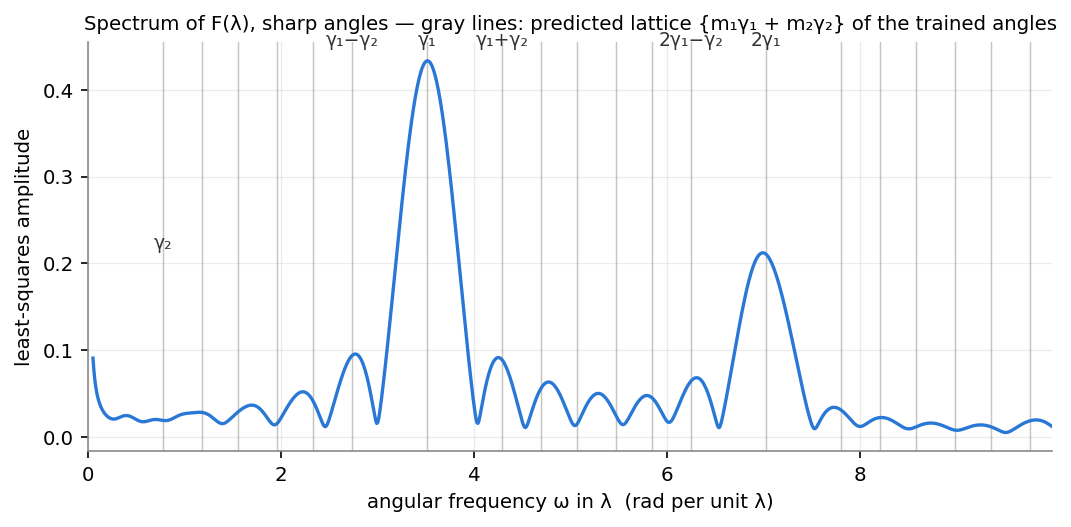}
\caption{Least-squares amplitude spectrum of $F(\lambda)$ (sharp angles,
$\lambda\in[0,12]$). Vertical gray lines: the predicted integer lattice
$\{m_1\gamma_1+m_2\gamma_2\}$ of the trained angles. The measured lines sit on
the lattice: $\gamma_1$, $2\gamma_1$, $\gamma_1\pm\gamma_2$, $2\gamma_1-\gamma_2$
are the five strongest.}
\label{fig:spectrum}
\end{figure}

\paragraph{P3: Revivals and the Spectral Fingerprint}
The sharp curve in Fig.~\ref{fig:curves} has exactly three peaks in $[0,5]$:
the trained point and both predicted revivals ($0.40$ and $4.00$ measured vs.\
$0.41$ and $3.99$ predicted). The blunt curve has none, as predicted. The
least-squares line spectrum of the long sweep is shown in
Fig.~\ref{fig:spectrum}:

\begin{center}
\begin{tabular}{cccl}
\toprule
measured line & amplitude & nearest lattice value & identity\\
\midrule
3.520 & 0.433 & 3.512 & $\gamma_1$\\
6.990 & 0.212 & 7.024 & $2\gamma_1$\\
2.770 & 0.096 & 2.734 & $\gamma_1-\gamma_2$\\
4.250 & 0.092 & 4.290 & $\gamma_1+\gamma_2$\\
6.300 & 0.068 & 6.246 & $2\gamma_1-\gamma_2$\\
\bottomrule
\end{tabular}
\end{center}

The five strongest lines land on the trained-$\gamma$ lattice within the
window's frequency resolution. We regard this as the decisive, exclusive test:
no energetic account of penalty mismatch and no structural account of
transfer, predicts spectral lines \emph{at the numerical values of the trained
angles}. Path interference with phases $\gamma_k\lambda v$ does, and nothing
else does.

\paragraph{A bonus observation.}
The blunt set's \emph{peak} feasibility ($0.999$) exceeds the sharp set's
($0.960$) on the very instance both were trained on, at statistically
indistinguishable training loss. Among near-degenerate optima of the training
objective, low-$|\gamma|$ solutions appear better on every axis we measured;
we return to this in Section~\ref{sec:protocol}.

\section{How the anomaly presented}
\label{sec:anomaly}

The phenomenon was not found by looking for it. In the resource-allocation
study of Section~\ref{sec:testbed}, angles trained on the 20-qubit instance were
transferred to a 28-qubit sibling. An early run measured a transferred feasible fraction
of $0.956$ against $0.059$ for untrained (uniform) angles which is a $17\times$
sample-efficiency gain and exactly the behavior transfer is supposed to buy.
Subsequent rebuilds of the \emph{same} problem, with equally good training
losses, measured $0.102$, then $0.254$, then $0.388$. No structural quantity
had changed. The only uncontrolled variable was the penalty weight: the
builder auto-calibrates $\lambda$ against the fitted surrogate's spread, and
rebuilds landed at slightly different values ($1.93$--$2.51$) than the donor's
training weight ($2.196$).

A dedicated sweep over the target build's $\lambda$ (all runs $8$ weight
vectors $\times$ $2048$ shots on an ideal simulator) resolved it:

\begin{center}
\begin{tabular}{ccll}
\toprule
$\Delta\lambda=\lambda_{\rm target}-\lambda_{\rm trained}$ & feasible fraction & hv ratio & note\\
\midrule
$-0.65$ & 0.000 & 0.9456 & quality gate fails (4 feasible shots)\\
$-0.27$ & 0.254 & 0.9974 & sharp donor\\
$-0.27$ & \textbf{0.957} & 0.9999 & \textbf{blunt donor, same $\Delta\lambda$}\\
$-0.16$ & 0.594 & 0.9997 & \\
$\phantom{-}0.00$ & \textbf{0.964} & 0.9994 & $\lambda$-matched\\
$+0.09$ & 0.758 & 1.0000 & \\
$+0.31$ & 0.130 & 0.9981 & \\
\bottomrule
\end{tabular}
\end{center}

A resonance, peaked at $\Delta\lambda=0$, falling in \emph{both} directions, something that no monotone penalty-strength story can produce, plus the width contrast
between donors that Corollary~\ref{cor:width} demands. Every historical
``regression'' fit the curve retroactively, including a stale-versus-fresh
angle puzzle.
Notably, the anomalous \emph{original} success ($0.956$) turned out to have
used low-budget angles: it was blunt-donor robustness, misread at the time as
generic transferability.

Note also the second row's high hv ratio at feasible fraction $0.254$: the
multi-objective quality metric: pooled-front hypervolume, degrades far more
slowly than feasibility, as expected. This is because detuning suppresses the feasible
amplitude but distorts the \emph{conditional} distribution over feasible states
only weakly, until feasibility approaches zero and the front starves (first
row).

\subsection{The protocol and its measured payoff}
\label{sec:protocol}

Two rules, both zero-quantum-cost:

\begin{enumerate}
\item \textbf{$\lambda$-matching}: build every transfer target with the donor's
  training weight (still verified against the target's feasibility-validity
  gate; if the target's validity floor exceeds the trained $\lambda$, retrain
  or escalate to constraint-preserving mixers~\cite{hadfield2019qaoa}).
\item \textbf{Blunt donors}: among near-equal-loss training optima, prefer
  small $\sum_k|\gamma_k|$ that are obtained for free by low-budget training, or
  deliberately by a $\|\gamma\|^2$ tie-breaker.
\end{enumerate}

Adopted in the pipeline, the final validated results at 28 qubits are:
transferred feasible fraction $\mathbf{0.965}$ (homogeneous instance) and
$\mathbf{0.987}$ (mixed-modality instance), with pooled-front hypervolume
ratios $0.9995$ and $1.0005$ against held-out reference fronts, at a uniform
baseline of $0.056$. The decisive control: one target's auto-calibrated weight
was \emph{lowered} to match its donor ($2.290\to2.092$) and feasibility
\emph{rose} from $0.653$ to $0.987$. This shows that matching, not strengthening, is the
mechanism.

The stakes grow with size. Uniform sampling's feasible fraction for this
problem family is $\prod_j(|G_j|{+}1)/2^{|G_j|}$-shaped and decays from
$0.178$ ($20$q) to $0.056$ ($28$q), $0.013$ ($42$q), and $6\times10^{-8}$
($96$q): at scale, a penalty-QAOA sampler without transfer is a random-number
generator, and with mismatched transfer it is one too
(Corollary~\ref{cor:vmax} narrows the resonance).
$\lambda$-matching is what makes the transfer column of that table exist.

\section{Limitations and Future Work}
\label{sec:limitations}

The theorem is fully general, but the empirical confirmation has a definite
scope, and each limitation points directly at the next experiment or proof.

\begin{itemize}
\item \textbf{Breadth of replication.} The exact-statevector study isolates
  the mechanism on one instance family at depth $p=2$ and $n=20$; the
  28-qubit evidence is shot-based, and the resonance there is sampled at
  seven points, not traced. Expanding exact tracings across problem families
  and higher depths would map the empirical boundaries of the phase
  structure.
\item \textbf{Peak placement.} That training places $\lambda_0$ on the
  global ridge of $F$ is observed, not proven; indeed the theorem does not
  assert that $\lambda_0$ is the global maximizer, and revival peaks can tie
  it (and do, in our data). The transfer-relevant statement is
  weaker and stronger: $\Delta\lambda=0$ is the only choice that sits on the
  peak ridge for every angle set, instance size, and scalarization weight
  simultaneously, because revival locations depend on the donor's $\gamma$'s
  while the trained point does not. A variational argument for peak placement
  would close the loop.
\item \textbf{Cross-instance coefficients.} Transferring to a different
  instance changes the coefficients $A_\omega$, not just the operating point.
  Our data show the resonance survives such changes, but a theory of how $A_\omega$ deforms under
  structural transfer, connecting this work to the QAOA lightcone literature,
  remains open.
\item \textbf{Non-integer penalties.} If $v$ takes values in any finite set
  $V\subset\R$, the proof goes through verbatim with
  $\Omega=\{\sum_k\gamma_k(v_k-v'_k): v_k,v'_k\in V\}$: $F$ is still a finite
  trigonometric polynomial, but without the integer lattice there is no exact
  periodicity and revivals become incommensurate. This mirrors the known
  contrast between integer-valued and weighted cost Hamiltonians, whose
  $\gamma$-landscape periodicity must be repaired by
  rescaling~\cite{sureshbabu2024parameter}; quantifying the resulting revival
  degradation is open.
\item \textbf{Beyond single penalties.} The path-sum argument applies to any
  parameter entering the circuit exclusively through a diagonal,
  finitely-valued operator. Analogous ``memory laws'' should exist for
  multi-penalty problems and for the weight vectors of scalarized
  multi-objective costs.
\end{itemize}

\section{Conclusion}
\label{sec:conclusion}
The gap we closed here is that the transfer literature varies structure at fixed penalty
convention; the penalty literature varies $\lambda$ while (re-)optimizing
angles. The regime of deployed QAOA at scale is the complement of both:
\emph{fixed} trained angles, \emph{varying} penalty scale, no retraining
possible. Theorem~\ref{thm:main} is, to our knowledge, the first
characterization of that regime, and the spectral fingerprint of
Section~\ref{sec:experiments} its first direct measurement. The nearest
conceptual relative is the known periodicity of QAOA landscapes in $\gamma$ for
integer-valued Hamiltonians ~\cite{zhou2020qaoa,
sureshbabu2024parameter}; the $\lambda$-resonance is the transfer-relevant
shadow of that structure on the penalty axis, with consequences (resonance,
revivals, $\vmax$ narrowing, blunt-donor robustness) that appear not to have
been stated before.

Penalty-based QAOA transfer is governed by an exact, elementary, and
previously unexploited law: the trained angles are a phase reference for the
penalty scale, and transfer feasibility is a trigonometric resonance in the
deployment penalty weight. Both the pathology of ``unpredictable''
feasibility collapse under transfer and its zero-cost fix follow from
reading the QAOA phase separator literally.

The mechanism yields immediate, actionable guidance for practitioners. Since
angles are only meaningful relative to their training penalty weight,
$\lambda_0$ must be recorded as part of the transferable artifact
$(\beta,\gamma,\lambda_0)$. When transferring parameters, match the donor's
$\lambda_0$ rather than tuning it: raising $\lambda$ heuristically ``for
safety'' moves the deployment off-resonance exactly as surely as lowering it.
Among near-degenerate training optima, prefer ``blunt'' donors with small
$\sum_k|\gamma_k|$ which can be found for free by low-budget training, or on purpose by
a $\|\gamma\|^2$ tie-breaker, since the resonance width, and with it the
tolerance to calibration drift and instance-to-instance variation, is
inversely proportional to the phase budget. Because that width also shrinks
like $1/\vmax$, strict $\lambda$-matching becomes non-negotiable at scale.
Finally, an unexplained feasibility collapse under transfer is now cheaply
diagnosable: sweeping the deployment $\lambda$ at fixed angles reveals a
two-sided peak at $\lambda_0$ (or its revivals), identifying phase mismatch
and ruling out structural causes in a single experiment.\\
\\
\noindent \textbf{Use of Generative AI: } LLMs were used to structure sentences for clarity and to suggest alternative phrasings for the manuscript. The final manuscript was carefully reviewed and corrected by the authors to ensure precision.\\
\\
\noindent \textbf{Funding: } This work was supported by the Natural Sciences and Engineering Research Council of Canada (NSERC)
through an Undergraduate Student Research Award (USRA) to Krit Grover.

\end{document}